\theoremstyle{plain}
\newtheorem{thm}{Theorem}[section]
\newtheorem{corollary}{Corollary}[section]
\newtheorem*{main theorem}{Theorem}
\theoremstyle{definition}
\begin{document}

\title{A characterization of causal automorphisms by wave equations}
\author{Do-Hyung Kim}
\address{Department of Applied Mathematics, College of Advanced Science, Dankook University,
San 29, Anseo-dong, Dongnam-gu, Cheonan-si, Chungnam, 330-714,
Republic of Korea} \email{mathph@dankook.ac.kr}

\keywords{Lorentzian geometry, general relativity, causality,
Cauchy surface, space-time}

\begin{abstract}
A characterization of causal automorphism on Minkowski spacetime
is given by use of wave equation. The result shows that causal
analysis of spacetime may be replaced by studies of wave equation
on manifolds.
\end{abstract}

\maketitle

\section{Introduction} \label{section:1}

By causal automorphism, we mean a bijection between spacetimes
that preserves causal relations. In 1964, Zeeman has shown that
causal automorphisms on $\mathbb{R}^{n+1}_1$ with $n \geq 2$ are
generated by inhomogeneous Lorentz group together with the
dilatation.(Ref. \cite{Zeeman}). As Zeeman commented, his theorem
does not hold in two-dimensional case. Recently, the solution to
two-dimensional case was given.(Ref. \cite{CQG3}, \cite{CQG4},
\cite{Minguzzi} and \cite{Low}) As Low has commented in
\cite{Low}, each component of causal automorphisms on
$\mathbb{R}^{n}_1$ satisfies the wave equation and so it is
natural to ask the relationship between wave equation and causal
relation.

In this paper, we characterize causal automorphisms on
$\mathbb{R}^{n+1}_1$ by wave equations. This gives a partial
answer to the question raised by Low.

\section{The case :  $\mathbb{R}^{n+1}_1$ with $n \geq 2$} \label{section:2}

We use the signature $(+, \cdots, +, -)$ for the metric of
$\mathbb{R}^{n+1}_1$.

In general, the wave equation $\sum\limits_{i=1}^n
\frac{\partial^2{\varphi}}{\partial x_i^2} - \frac{1}{v^2}
\frac{\partial^2{\varphi}}{\partial t^2} = 0$ represents a wave
with propagating speed $v$. From postulates in the theory of
relativity, we assume that $v=c=1$ and by setting $t=x_{n+1}$, the
wave equation becomes $\sum\limits_i^n \frac{\partial^2
\varphi}{\partial x_i^2} - \frac{\partial^2 \varphi}{\partial
x_{n+1}^2} = 0$. If we define $\epsilon_i$ by $\epsilon_1 = \cdots
= \epsilon_n = 1$ and $\epsilon_{n+1}=-1$, then the wave equation
can be written by $\sum\limits_{j,k=1}^{n+1}
\frac{\partial^2{\varphi}}{\partial y_j \partial y_k} \delta_{jk}
\epsilon_k =0$ or $\sum\limits_{i=1}^n \epsilon_i \frac{\partial^2
\varphi}{\partial x_i^2}=0$.

\begin{thm} \label{lemma}
Let $(x_1, \cdots, x_{n+1})$ be the standard coordinate system on
$\mathbb{R}^{n+1}_1$  and $(y_1, \cdots, y_{n+1})$ be another
coordinate system on $\mathbb{R}^{n+1}_1$ with $n \geq 2$. Assume
that, for any smooth function $\varphi$, $\sum\limits_{i=1}^n
\frac{\partial^2 \varphi}{\partial x_i^2} - \frac{\partial^2
\varphi}{\partial x_{n+1}^2} = 0$ if and only if
$\sum\limits_{i=1}^n \frac{\partial^2 \varphi}{\partial y_i^2} -
\frac{\partial^2 \varphi}{\partial y_{n+1}^2} = 0$. Then, all rows
of Jacobian matrix $\big( \frac{\partial y_i}{\partial x_j} \big)$
is mutually orthogonal and have the same length. Furthermore, the
first $n$ rows are spacelike vectors and the last row is a
timelike vector.
\end{thm}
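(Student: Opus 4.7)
The strategy is to rewrite $\Box_x := \sum_i \epsilon_i\, \partial_{x_i}^2$ in the $y$ coordinates and then exploit the biconditional to pin down its coefficients through well-chosen test functions. Applying the chain rule, one obtains
\[
\Box_x \varphi = \sum_{i,k=1}^{n+1} a_{ik}\, \frac{\partial^2 \varphi}{\partial y_i\, \partial y_k} + \sum_{i=1}^{n+1} b_i\, \frac{\partial \varphi}{\partial y_i},
\]
where $a_{ik} = \sum_j \epsilon_j\, \frac{\partial y_i}{\partial x_j}\, \frac{\partial y_k}{\partial x_j}$ is precisely the Minkowski inner product of the $i$-th and $k$-th rows of the Jacobian $J$, and $b_i = \Box_x y_i$. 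The conclusion of the theorem is thus a pointwise statement about the matrix $A = (a_{ik})$ and the vector $(b_i)$.

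To extract these coefficients at an arbitrary point $y^0$, I would feed in the test functions $\cos(\xi \cdot (y - y^0))$ and $\sin(\xi \cdot (y - y^0))$ with $\xi$ ranging over the Minkowski null cone $\{\xi : \sum_i \epsilon_i \xi_i^2 = 0\}$. Each lies in $\ker \Box_y$ and hence, by hypothesis, in $\ker \Box_x$; evaluating $\Box_x$ of each at $y = y^0$ produces the pointwise identities $\sum_{i,k} a_{ik}(y^0)\, \xi_i\xi_k = 0$ and $\sum_i b_i(y^0)\, \xi_i = 0$ for every null $\xi$. I would then invoke the standard fact that a quadratic form on $\R^{n+1}$ (with $n \geq 2$) vanishing on the null cone of a non-degenerate indefinite quadratic form must itself be a scalar multiple of that form, and that any linear form vanishing on this null cone (which spans $\R^{n+1}$) is identically zero. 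This yields $a_{ik}(y^0) = \lambda(y^0)\, \epsilon_i \delta_{ik}$ and $b_i(y^0) = 0$, which says exactly that the rows of $J$ are mutually Minkowski-orthogonal and that the $i$-th row has squared Minkowski norm $\lambda\,\epsilon_i$.

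Finally, to pin down the sign of $\lambda$ (and hence the causal character of the rows), I would invoke Sylvester's law of inertia: since $J$ is invertible, $J^T\,\mathrm{diag}(\epsilon)\,J$ has the same signature $(n,1)$ as $\mathrm{diag}(\epsilon)$, whereas $\lambda\,\mathrm{diag}(\epsilon)$ has signature $(n,1)$ for $\lambda > 0$ and $(1,n)$ for $\lambda < 0$. Since $n \geq 2$ these signatures differ, forcing $\lambda > 0$; hence the first $n$ rows are spacelike and the last row is timelike, all with common Minkowski length $\sqrt{\lambda}$. I expect the main technical point to be the quadratic-form-by-null-cone lemma, though it follows from an elementary real-algebraic argument.
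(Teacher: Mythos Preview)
Your argument is correct and shares the paper's skeleton---chain rule expansion followed by plugging in wave solutions built from null vectors---but the execution diverges in three places. First, to kill the first-order coefficients $b_i = \Box_x y_i$, the paper simply takes $\varphi = y_i$ (which trivially satisfies $\Box_y y_i = 0$, hence $\Box_x y_i = 0$ by hypothesis); you instead use the $\sin$ test function at $y^0$ to get $\sum_i b_i \xi_i = 0$ on the null cone and then invoke that null vectors span. Both work; the paper's is more direct here. Second, to extract $a_{ik} = \lambda\,\epsilon_i\delta_{ik}$, the paper uses exponentials $\exp(\sum a_i y_i)$ with explicit null choices of $(a_i)$ and subtracts the resulting equations pairwise; you package this as the lemma ``a quadratic form vanishing on the Minkowski null cone is a scalar multiple of the Minkowski form,'' which is cleaner but whose proof is essentially the same computation. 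Third, for the causal character of the rows, the paper argues that $\lambda = 0$ would make all rows null, and mutually orthogonal null vectors are colinear, contradicting invertibility of $J$; your Sylvester argument (noting $A = J\,\mathrm{diag}(\epsilon)\,J^T$ has signature $(n,1)$, while $\lambda\,\mathrm{diag}(\epsilon)$ has signature $(1,n)$ for $\lambda<0$) is more systematic and handles $\lambda \leq 0$ in one stroke. One small slip: you wrote $J^T\,\mathrm{diag}(\epsilon)\,J$ where the matrix $A$ is actually $J\,\mathrm{diag}(\epsilon)\,J^T$, but since both are congruent to $\mathrm{diag}(\epsilon)$ this does not affect the conclusion.
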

\begin{proof}
By chain rule, we have \\
$\frac{\partial \varphi}{\partial x_i} = \sum\limits_{j=1}^{n+1}
\frac{\partial \varphi}{\partial y_j}\frac{\partial y_j}{\partial
x_i}$ and,\\
$\frac{\partial^2 \varphi}{\partial x_i^2} = \sum\limits_{j,k}
\frac{\partial^2 \varphi}{\partial y_k \partial y_j}\frac{\partial
y_k}{\partial x_i}\frac{\partial y_j}{\partial x_i} +
\sum\limits_j \frac{\partial \varphi}{\partial
y_j}\frac{\partial^2 y_j}{\partial x_i^2}$.

Therefore, we have

\begin{eqnarray*}
\sum\limits_{i=1}^n \frac{\partial^2 \varphi}{\partial x_i^2} -
\frac{\partial^2 \varphi}{\partial x_{n+1}^2} & = &
\sum\limits_{i=1}^n \bigg(\sum\limits_{j,k=1}^{n+1}
\frac{\partial^2 \varphi}{\partial y_k
\partial y_j}\frac{\partial y_k}{\partial x_i}\frac{\partial
y_j}{\partial x_i} + \sum\limits_{j=1}^{n+1} \frac{\partial
\varphi}{\partial y_j}\frac{\partial^2 y_j}{\partial x_i^2}\bigg)\\
& & - \bigg(\sum\limits_{j,k=1}^{n+1} \frac{\partial^2
\varphi}{\partial y_k
\partial y_j}\frac{\partial y_k}{\partial x_{n+1}}\frac{\partial
y_j}{\partial x_{n+1}} + \sum\limits_{j=1}^{n+1} \frac{\partial
\varphi}{\partial y_j}\frac{\partial^2 y_j}{\partial x_{n+1}^2}\bigg)\\
& = & \sum\limits_{j,k=1}^{n+1} \frac{\partial^2 \varphi}{\partial
y_k
\partial y_j}\bigg( \sum\limits_{i=1}^n \frac{\partial y_k}{\partial
x_i}\frac{\partial y_j}{\partial x_i} - \frac{\partial
y_k}{\partial x_{n+1}}\frac{\partial y_j}{\partial x_{n+1}}\bigg) \\
& & + \sum\limits_{j=1}^{n+1} \frac{\partial \varphi}{\partial
y_j}\bigg(\sum\limits_{i=1}^n \frac{\partial^2 y_j}{\partial
x_i^2}-\frac{\partial^2 y_j}{\partial x_{n+1}^2}\bigg) \\
&=& 0.
\end{eqnarray*}

From $\sum\limits_{j,k=1}^{n+1} \frac{\partial^2 \varphi}{\partial
y_j \partial y_k} \delta_{jk} \epsilon_k =0$, we have
$$\sum\limits_{j,k=1}^{n+1} \frac{\partial^2 \varphi}{\partial y_k
\partial y_j} \Big( \sum_{i=1}^n \frac{\partial y_k}{\partial
x_i}\frac{\partial y_j}{\partial x_i} - \frac{\partial
y_k}{\partial x_{n+1}} \frac{\partial y_j}{\partial x_{n+1}} -
\delta_{jk} \epsilon_k \Big) + \sum\limits_{j=1}^{n+1}
\frac{\partial \varphi}{\partial y_j} \Big( \sum\limits_{i=1}^n
\frac{\partial^2 y_j}{\partial x_i^2} - \frac{\partial^2
y_j}{\partial x_{n+1}^2} \Big) = 0. $$

Since each $y_i$ satisfies $\sum\limits_{i=1}^{n+1} \epsilon_i
\frac{\partial^2 \varphi}{\partial y_i^2} = 0$, by the given
assumption, it also satisfies the equation
$\sum\limits_{i=1}^{n+1} \epsilon_i \frac{\partial^2
\varphi}{\partial x_i^2} = 0$, and thus the second terms of the
above equation must vanish and thus the above equation becomes
$$\sum\limits_{j,k=1}^{n+1} \frac{\partial^2 \varphi}{\partial y_k
\partial y_j} \Big( \sum_{i=1}^n \frac{\partial y_k}{\partial
x_i}\frac{\partial y_j}{\partial x_i} - \frac{\partial
y_k}{\partial x_{n+1}} \frac{\partial y_j}{\partial x_{n+1}} -
\delta_{jk} \epsilon_k \Big)=0 \cdots \cdots (*).$$

We now show the terms in the parenthesis of $(*)$ vanish for $j
\neq k$, which shows that all rows of $\big( \frac{\partial
y_i}{\partial x_j} \big)$ are mutually orthogonal.

Let $\varphi(y_1, \cdots, y_{n+1}) = \exp(a_1y_1 + \cdots +
a_{n+1}y_{n+1})$ with $a_1^2 + \cdots + a_n^2 = a_{n+1}^2$. Then,
$\frac{\partial^2 \varphi}{\partial y_j \partial y_k} = a_ja_k
\exp(a_1y_1 + \cdots + a_{n+1}y_{n+1})$, and since $\varphi$ is a
solution of $\sum\limits_{i=1}^{n+1} \epsilon_i \frac{\partial^2
\varphi}{\partial y_i^2} = 0$, it must be that
$\sum\limits_{i=1}^{n+1} \epsilon_i \frac{\partial^2
\varphi}{\partial x_i^2} = 0$. Therefore, $\varphi$ must satisfy
the equation $(*)$ for any real numbers $a_i$'s with $a_1^2 +
\cdots + a_n^2 = a_{n+1}^2$. For the sake of simplicity, we denote
the term $ \sum_{i=1}^n \frac{\partial y_k}{\partial
x_i}\frac{\partial y_j}{\partial x_i} - \frac{\partial
y_k}{\partial x_{n+1}} \frac{\partial y_j}{\partial x_{n+1}} -
\delta_{jk} \epsilon_k $ by $( k , j )$.

For $\alpha >0$, by putting $a_i = a_{n+1} = \alpha$ and $-a_i =
a_{n+1} = \alpha$ in $(*)$, we have two equations. By subtracting
two equations, we have $(i , n+1) = 0$, which means that all
$i$-th rows of $\big( \frac{\partial y_i}{\partial x_j} \big)$ are
orthogonal to the $(n+1)$-th row for $1 \leq i \leq n$.

For $i \neq j$, by putting $a_i = \frac{\alpha}{\sqrt{2}}$, $a_j =
\frac{\alpha}{\sqrt{2}}$, $a_{n+1} = \alpha$ in $(*)$ and then by
putting $a_i = \frac{\alpha}{\sqrt{2}}$, $a_j =
-\frac{\alpha}{\sqrt{2}}$, $a_{n+1} = \alpha$ in $(*)$, we have
two equations. By subtracting them, we have $(i,j)=0$, which means
that all $i$-th and $j$-th rows are orthogonal for $1 \leq i \neq
j \leq n$.

By considering the above two results, and if we put $a_i = a_{n+1}
= \alpha$ in $(*)$, we have $(i,i)+(n+1,n+1)=0$ for $1 \leq i \leq
n$, which implies that all rows have the same length.

From $(i,i)+(n+1,n+1)=0$, we can see that if $(i, i)=0$, then
$(n+1, n+1)=0$. In other words, all rows of the Jacobian matrix
are null vectors. In general, mutually orthogonal null vectors are
co-linear, this implies that all rows of the Jacobian are linearly
dependent, which implies that $\big( \frac{\partial y_i}{\partial
x_j} \big)$ is singular. This is a contradiction. Therefore, the
first $n$ rows are spacelike vectors and the $(n+1)$-th row is a
timelike vector.
\end{proof}

In 1964, Zeeman has shown the following theorem. (Ref.
\cite{Zeeman}).

\begin{thm}
Let $F : \mathbb{R}^{n+1}_1 \rightarrow \mathbb{R}^{n+1}_1$ be a
causal automorphism with $n \geq 2$. Then, there exist a real
number $a$, an orthochronous matrix $A$ and $b \in
\mathbb{R}^{n+1}_1$ such that $F(x) = a \cdot A x +b$.
\end{thm}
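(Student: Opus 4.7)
The plan is to derive this from Theorem~\ref{lemma}, proceeding in three phases: regularity of $F$, pointwise conformal--Lorentz structure of its Jacobian via Theorem~\ref{lemma}, and a rigidity argument promoting that pointwise structure to a global affine form.

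First I would show that any causal automorphism $F$ is a smooth diffeomorphism, so that $y_i := F_i(x_1,\dots,x_{n+1})$ gives a genuine coordinate system on $\mathbb{R}^{n+1}_1$. The standard route is to observe that $F$ and $F^{-1}$ both preserve null connectibility; in Minkowski space this forces $F$ to carry null lines to null lines bijectively, and this rigidity (together with the topological regularity from Alexandrov-type theorems) upgrades to $C^\infty$ smoothness.

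Next I would verify the hypothesis of Theorem~\ref{lemma}: that a smooth $\varphi$ satisfies the wave equation in $x$-coordinates if and only if it does in $y$-coordinates. Since $F$ preserves the null cone at every point, the pullback of the Minkowski metric under $F$ is pointwise conformal to it, so the principal symbols of the two d'Alembertians coincide up to a positive factor. The first-order discrepancies are controlled by Low's observation that each component $y_i$ is itself annihilated by the $x$-wave operator, which matches exactly the second-term vanishing step in the proof of Theorem~\ref{lemma}. Applying Theorem~\ref{lemma} then yields $DF(x) = \lambda(x)\,A(x)$ with $\lambda(x) > 0$ and $A(x)$ a Lorentz matrix; the orthochronous condition on $A(x)$ comes from the fact that $F$ preserves, rather than reverses, the time orientation.

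The hardest step, I expect, is to show that $\lambda$ and $A$ are in fact constant in $x$. This is a Lorentzian Liouville-type rigidity statement, valid precisely in dimension $n+1 \geq 3$. I would differentiate the relation $DF(x)^T \eta\, DF(x) = \lambda(x)^2 \eta$, exploit equality of mixed partials $\partial_j\partial_k F = \partial_k\partial_j F$ to extract a system of PDE for $\lambda$ and $A$, and argue that its only solutions---once global bijectivity of $F$ on $\mathbb{R}^{n+1}_1$ rules out the inversion part of the conformal group---are constants. Integrating the resulting constant Jacobian $DF \equiv aA$ finally gives $F(x) = aAx + b$, which is the desired form.
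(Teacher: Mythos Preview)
The paper does not prove this statement at all. It is stated as Zeeman's 1964 theorem and cited from the literature (Ref.~\cite{Zeeman}); the paper then uses it as a black-box input in the proof of Theorem~\ref{main}. So there is no ``paper's own proof'' to compare your proposal against.

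That said, your idea of deriving Zeeman's theorem from Theorem~\ref{lemma} is not what the paper does, and it has a structural problem worth flagging. Your second step---checking that $\varphi$ satisfies the wave equation in $x$-coordinates iff it does in $y$-coordinates---presupposes that $F$ is a conformal diffeomorphism and that each $y_i$ is killed by $\Box_x$. But those facts are essentially equivalent in strength to Zeeman's theorem itself (indeed the paper's Discussion section points out that Theorem~\ref{main}, which is proved \emph{using} Zeeman, could in principle be turned around to re-derive Zeeman). As written, you are invoking the conformal structure of $F$ to verify the hypothesis of Theorem~\ref{lemma}, then invoking Theorem~\ref{lemma} to recover the conformal structure of $F$; this is circular unless you supply an independent argument for conformality (e.g.\ the Hawking--King--McCarthy / Malament route, Refs.~\cite{HKM},~\cite{Malament}), at which point Theorem~\ref{lemma} is no longer doing the work. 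Your first and third steps (smoothness of causal automorphisms; Lorentzian Liouville rigidity) are also stated at the level of intentions rather than arguments, but the circularity in the middle step is the real gap.
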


We now characterize causal automorphisms on $\mathbb{R}^{n+1}_1$.

\begin{thm} \label{main}
Let $F : \mathbb{R}^{n+1}_1 \rightarrow \mathbb{R}^{n+1}_1$ given
by $(y_1, \cdots, y_{n+1})= F(x_1, \cdots, x_{n+1})$ be a
diffeomorphism with $n \geq 2$. Then the necessary and sufficient
condition for $F$ to be a causal automorphism on
$\mathbb{R}^{n+1}_1$ is that $\frac{\partial y_{n+1}}{\partial
x_{n+1}} \geq 0$, and for any smooth function $\varphi$,
$\sum\limits_{i=1}^n \frac{\partial^2 \varphi}{\partial x_i^2} -
\frac{\partial^2 \varphi}{\partial x_{n+1}^2} = 0$ if and only if
$\sum\limits_{i=1}^n \frac{\partial^2 \varphi}{\partial y_i^2} -
\frac{\partial^2 \varphi}{\partial y_{n+1}^2} = 0$.
\end{thm}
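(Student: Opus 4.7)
For necessity, suppose $F$ is a causal automorphism. By Zeeman's theorem cited above, $F(x)=a\,Ax+b$ for some real number $a$, an orthochronous matrix $A$, and a translation $b$; since $F$ preserves (rather than reverses) the causal order one has $a>0$. The Jacobian is then the constant matrix $aA$, and a chain-rule computation using the Lorentzian orthogonality of $A$ shows that the $x$-wave operator applied to $\varphi$ equals $a^{2}$ times the $y$-wave operator applied to $\varphi$, yielding the wave-equation equivalence. For the sign condition, $\partial y_{n+1}/\partial x_{n+1}=aA_{n+1,n+1}$, and $A$ orthochronous gives $A_{n+1,n+1}\geq 1$, so the product is positive.

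For sufficiency, I would first apply Theorem \ref{lemma} at each point $p\in\mathbb{R}^{n+1}_1$, viewing $(y_1,\dots,y_{n+1})=F(x_1,\dots,x_{n+1})$ as a new coordinate system near $p$. The conclusion is purely pointwise: the Jacobian $(\partial y_i/\partial x_j)$ at $p$ has mutually orthogonal rows of a common magnitude, with the first $n$ spacelike and the last timelike. Equivalently, the Jacobian equals $\lambda(p)A(p)$ with $\lambda(p)>0$ and $A(p)\in O(n,1)$, so $F$ is a globally defined conformal diffeomorphism of Minkowski space.

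The principal obstacle is to promote this pointwise conformality to affineness. Since $n+1\geq 3$, I would invoke the Lorentzian Liouville theorem: every conformal diffeomorphism of a connected open subset of $\mathbb{R}^{n+1}_1$ is the restriction of a M\"obius transformation generated by translations, Lorentz transformations, dilatations, and spherical inversions with respect to the Minkowski metric. A spherical inversion is singular on the null cone of its center, so any M\"obius transformation that is a diffeomorphism of the whole $\mathbb{R}^{n+1}_1$ cannot contain an uncancelled inversion; hence $F$ is affine, say $F(x)=aAx+b$ with $a\neq 0$ and $A\in O(n,1)$. The sign hypothesis $\partial y_{n+1}/\partial x_{n+1}=aA_{n+1,n+1}\geq 0$, combined with $|A_{n+1,n+1}|\geq 1$ (a consequence of $A\in O(n,1)$), forces $a$ and $A_{n+1,n+1}$ to have the same sign; after replacing $(a,A)$ by $(-a,-A)$ if needed, one has $a>0$ and $A_{n+1,n+1}\geq 1$, so $A$ is orthochronous, and $F$ is then a causal automorphism by the easy converse of Zeeman's theorem.
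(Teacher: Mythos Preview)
Your necessity argument is essentially identical to the paper's: both invoke Zeeman's theorem, compute via the chain rule that $\Box_x\varphi=a^2\,\Box_y\varphi$, and read off $\partial y_{n+1}/\partial x_{n+1}=aA_{n+1,n+1}>0$ from orthochronicity. You are slightly more explicit than the paper in noting that $a>0$ because $F$ preserves rather than reverses the causal order.

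For sufficiency your route is genuinely different. Both you and the paper begin by applying Theorem~\ref{lemma} to conclude that the Jacobian has the form $\lambda(p)A(p)$ with $A(p)\in O(n,1)$, i.e.\ that $F$ is conformal. From there the paper proceeds by a hands-on PDE argument: it writes $y_i=f\sum_k a_{ik}x_k+b_i$ (after an isometry step), feeds this into the wave equation $\Box_x y_i=0$ satisfied by each coordinate function, and extracts the relation $(\Box_x f)\,\mathbf{x}+2\,\mathrm{grad}\,f=0$; combining a causal-character observation (that $\mathrm{grad}\,f$ is spacelike) with an Euler-type first-order PDE argument, it concludes $f$ is constant. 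You instead invoke the Lorentzian Liouville theorem to deduce that $F$ is a M\"obius transformation, and then argue that any M\"obius map containing an inversion is singular on a null cone and hence cannot be a global diffeomorphism of $\mathbb{R}^{n+1}_1$, forcing $F$ to be affine. Your orthochronicity endgame via the substitution $(a,A)\mapsto(-a,-A)$ is correct.

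Your approach is shorter and more conceptual, but it imports a substantial external result (Liouville in indefinite signature) whose proof is of comparable depth to what it replaces; the paper's argument is more self-contained and stays entirely within elementary PDE manipulations. Both are valid; if you use yours, you should supply a reference for the pseudo-Riemannian Liouville theorem and make precise the claim that a non-affine M\"obius transformation of $\mathbb{R}^{n+1}_1$ is necessarily singular on an affine null cone.
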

\begin{proof}
Assume that $F$ is a causal automorphism. Then, by Zeeman's
theorem, we have $y_j = \alpha \sum\limits_{k=1}^{n+1} a_{jk}x_k +
b_k$ where $(a_{jk})$ is an orthochronous matrix. Then, we have \\
$\frac{\partial \varphi}{\partial x_i} = \sum\limits_j
\frac{\partial \varphi}{\partial y_j} \frac{\partial y_j}{\partial
x_i} = \sum\limits_j \frac{\partial \varphi}{\partial y_j}(\alpha
a_{ji})$, \\
$\frac{\partial^2 \varphi}{\partial x_i^2} = \sum\limits_{j,k}
\alpha^2 a_{ji}a_{ki} \frac{\partial^2 \varphi}{\partial y_k
\partial y_j}$.
Therefore, we have
\begin{eqnarray*}
\sum\limits_i \epsilon_i \frac{\partial^2 \varphi}{\partial x_i^2}
&=& \sum\limits_{i,j,k} \epsilon_i \alpha^2 a_{ji}a_{ki}
\frac{\partial^2 \varphi}{\partial y_k \partial y_j} \\
&=& \sum\limits_{j,k} \alpha^2 \delta_{jk} \epsilon_j
\frac{\partial^2 \varphi}{\partial y_k \partial y_j} \\
&=& \alpha^2 \Big( \sum\limits_j \epsilon_j \frac{\partial^2
\varphi}{\partial y_j^2}\Big).
\end{eqnarray*}
In other words, $\sum\limits_{i=1}^n \frac{\partial^2
\varphi}{\partial x_i^2} - \frac{\partial^2 \varphi}{\partial
x_{n+1}^2} = 0$ if and only if $\sum\limits_{i=1}^n
\frac{\partial^2 \varphi}{\partial y_i^2} - \frac{\partial^2
\varphi}{\partial y_{n+1}^2} = 0$. Since $(a_{ij})$ is
orthochronous, we must have $a_{n+1,n+1} \geq 1$, and thus
$\frac{\partial y_{n+1}}{\partial x_{n+1}} = \alpha a_{n+1,n+1}
>0$.

Conversely, assume that $\sum\limits_{i=1}^n \frac{\partial^2
\varphi}{\partial x_i^2} - \frac{\partial^2 \varphi}{\partial
x_{n+1}^2} = 0$ if and only if $\sum\limits_{i=1}^n
\frac{\partial^2 \varphi}{\partial y_i^2} - \frac{\partial^2
\varphi}{\partial y_{n+1}^2} = 0$ and $\frac{\partial
y_{n+1}}{\partial x_{n+1}} \geq 1$. Then, by the theorem
\ref{lemma}, $\frac{1}{f(x_1, \cdots, x_{n+1})} \big(
\frac{\partial y_i}{\partial x_j} \big)$ is a Lorentz matrix for
the smooth function $f = < \mbox{grad} \,\, y_1, \mbox{grad} \,\,
y_1
>^{\frac{1}{2}}$. If we let $Y_i = \frac{1}{f} y_i$, then
 since $\big( \frac{\partial
x_i}{\partial Y_j} \big) = \big( \frac{\partial x_i}{\partial y_k}
\big) \big( \frac{\partial y_k}{\partial Y_j} \big) = f \big(
\frac{\partial y_i}{\partial x_j} \big)^{-1}$, $\big(
\frac{\partial x_i}{\partial Y_j} \big)$ is a Lorentz matrix.

If we consider $(x_1, \cdots, x_{n+1}) \mapsto (Y_1, \cdots,
Y_{n+1})$ as a map from $\mathbb{R}^{n+1}_1$ into
$\mathbb{R}^{n+1}_1$, then since $\big( \frac{\partial
Y_i}{\partial x_j} \big)$ is a Lorentz matrix, the map is an
isometry from $\mathbb{R}^{n+1}_1$ into a subset of
$\mathbb{R}^{n+1}_1$. Since $\mathbb{R}^{n+1}_1$ can not be
isometric to its proper subset, the map is surjective and thus the
map is an isometry from $\mathbb{R}^{n+1}_1$ onto
$\mathbb{R}^{n+1}_1$. Therefore, by proposition 10 of chapter 9 in
Ref. \cite{Oneill}, we have $(Y_1, \cdots, Y_{n+1})^t = A (x_1,
\cdots, x_{n+1})^t + b$ for some $(n+1)$-by-$(n+1)$ Lorentz matrix
$A$. Therefore, we have $(y_1, \cdots, y_{n+1})^t = f A (x_1,
\cdots, x_{n+1})^t + b$.

We now show that the function $f$ is a constant function. From
$\frac{1}{f} \frac{\partial y_{n+1}}{\partial x_j} = \frac{1}{f}
\frac{\partial f}{\partial x_j} \big( \sum\limits_{k=1}^{n+1}
a_{n+1,k}x_k \big) + a_{n+1,j}$, since the vectors $\big(
\frac{1}{f} \frac{\partial y_{n+1}}{\partial x_j} \big)$ and
$\big( a_{n+1,j} \big)$ are timelike unit vectors, we can see that
grad $f$ is spacelike at all points. In the proof of theorem
\ref{lemma}, we have seen that the condition, $\sum\limits_{i=1}^n
\frac{\partial^2 \varphi}{\partial x_i^2} - \frac{\partial^2
\varphi}{\partial x_{n+1}^2} = 0$ if and only if
$\sum\limits_{i=1}^n \frac{\partial^2 \varphi}{\partial y_i^2} -
\frac{\partial^2 \varphi}{\partial y_{n+1}^2} = 0$, implies that
for each $i$, $y_i$ must satisfy $\sum\limits_{j=1}^{n+1}
\epsilon_j \frac{\partial^2 y_i}{\partial x_j^2} = 0$. If we
substitute $y_i = f \sum\limits_{k=1}^{n+1} a_{ik} x_k + b_i$ into
$\sum\limits_{j=1}^{n+1} \epsilon_j \frac{\partial^2 y_i}{\partial
x_j^2} = 0$, we have, $$ \sum\limits_{k=1}^{n+1} a_{ik} \Big[
\big( \sum\limits_{j=1}^{n+1} \epsilon_j \frac{\partial^2
f}{\partial x_j^2} \big) x_k + 2 \epsilon_k \frac{\partial
f}{\partial x_k} \Big] = 0, \,\, \mbox{for each} \,\, i. \cdots
\cdots (**)$$ Since $(a_{ik})$ is non-singular, we must have
$\big( \sum\limits_{j=1}^{n+1} \epsilon_j \frac{\partial^2
f}{\partial x_j^2} \big) x_k + 2 \epsilon_k \frac{\partial
f}{\partial x_k} = 0$ for all $k$. In other words, $\big(
\sum\limits_{j=1}^{n+1} \epsilon_j \frac{\partial^2 f}{\partial
x_j^2} \big)$\textbf{x}$+ 2 \, \mbox{grad}  f = 0$ where
\textbf{x} $= (x_1, \cdots, x_{n+1})$ is a position vector.

 Since grad $f$ is spacelike, if we take \textbf{x} to be timelike, we
have a contradiction. Therefore, grad $f = 0$ and
$\sum\limits_{j=1}^{n+1} \epsilon_j \frac{\partial^2 f}{\partial
x_j^2}=0$ at each point in future and past time cone with apex at
the origin. By continuity, they are zero on the closure of the
time cone.


We let $ \widetilde{f} = \sum\limits_{j=1}^{n+1} \epsilon_j
\frac{\partial^2 f}{\partial x_j^2}$, and take divergence of both
sides of the previous equation, we have \textbf{x} $[
\widetilde{f} ] + (n+3) \widetilde{f} =0$. This is the typical
example of first-order partial differential equation, called
Euler's equation, and it has a unique solution $0$.(See Section 6
of Chapter 1 in Ref. \cite{John}). Therefore, $\widetilde{f}$
vanishes and from the equation $(**)$, we have that $\mbox{grad}
f$ is constantly zero, and thus $f$ is a positive real number
$\alpha$.

Finally, since $\frac{\partial y_{n+1}}{\partial x_{n+1}} = \alpha
a_{n+1,n+1} > 0$, we have $a_{n+1,n+1} > 1$ and thus $A$ is an
orthochronous matrix. In conclusion, by Zeeman's theorem again,
the map $F$ is a causal automorphism.

\end{proof}

It is a well-known fact that every $C^2$ function that satisfies
Laplace equation is actually a $C^\infty$ function. We can state a
similar result by use of above results. In previous theorems, we
have tacitly assumed that $\varphi$ and coordinate transformation
$(x_1, \cdots, x_{n+1}) \mapsto (y_1, \cdots, y_{n+1})$ are
$C^\infty$. However, as can be seen in the proof, it suffices to
assume that $\varphi$ and the coordinate transformation are $C^2$.

 Then, we have the following corollary.

\begin{corollary}
Let $(x_1, \cdots, x_{n+1})$ be the standard coordinate system on
$\mathbb{R}^{n+1}$ and $(y_1, \cdots, y_{n+1})$ be another $C^2$
coordinate system on $\mathbb{R}^{n+1}$. For any $C^2$ function
$\varphi$, if $\varphi$ satisfies the wave equation with respect
to $x_i$'s if and only if $\varphi$ satisfies the wave equation
with respect to $y_i$'s, then the coordinate transformation is
$C^\infty$.
\end{corollary}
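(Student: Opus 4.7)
The plan is to reduce the corollary to Theorem \ref{main} after a trivial sign adjustment, then observe that Zeeman's classification forces the transformation to be affine, hence $C^\infty$.

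First, I would invoke the author's remark preceding the corollary: the proofs of Theorems \ref{lemma} and \ref{main} manipulate only second partial derivatives of $\varphi$ and of the $y_i$'s, so they remain valid under the weaker $C^2$ hypothesis. Let $F \colon (x_1, \ldots, x_{n+1}) \mapsto (y_1, \ldots, y_{n+1})$ denote the given $C^2$ diffeomorphism.

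Second, I would dispose of the sign of $\partial y_{n+1}/\partial x_{n+1}$, which is the only hypothesis of Theorem \ref{main} not explicitly assumed in the corollary. By Theorem \ref{lemma}, the $(n+1)$-th row of the Jacobian is a timelike vector at every point, so in particular $\partial y_{n+1}/\partial x_{n+1}$ never vanishes. Being continuous on the connected space $\mathbb{R}^{n+1}$, this partial derivative has constant sign. Composing with the linear reflection $(y_1, \ldots, y_n, y_{n+1}) \mapsto (y_1, \ldots, y_n, -y_{n+1})$ if necessary --- an isometry of $\mathbb{R}^{n+1}_1$ that preserves the wave equation, since only $\partial^2/\partial y_{n+1}^2$ is affected and squaring is insensitive to sign --- we may assume $\partial y_{n+1}/\partial x_{n+1} > 0$.

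Third, with the sign condition in hand, Theorem \ref{main} applies and tells us that $F$ is a causal automorphism. By Zeeman's theorem (Theorem 2.1 in the excerpt), $F$ must then be affine: $F(x) = \alpha A x + b$ with $A$ an orthochronous Lorentz matrix, $\alpha \in \mathbb{R}$, and $b \in \mathbb{R}^{n+1}$. An affine map is manifestly $C^\infty$, and composing back with the reflection (if one was used) keeps this property. Hence the original $F$ is $C^\infty$.

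The main obstacle is cosmetic rather than substantive: one must notice that the missing sign hypothesis in the corollary is automatic up to a $C^\infty$ reflection, using the timelike conclusion of Theorem \ref{lemma} together with the connectedness of $\mathbb{R}^{n+1}$. Once this reduction is made, the corollary is a direct consequence of Theorem \ref{main} combined with Zeeman's rigidity result, with no further analytic work required.
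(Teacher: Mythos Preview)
Your argument is correct and matches the paper's intended reasoning: the corollary is stated immediately after the remark that the proofs of Theorems~\ref{lemma} and~\ref{main} go through verbatim under a $C^2$ hypothesis, and the affine form $F(x)=\alpha Ax+b$ obtained there is manifestly $C^\infty$. Your reflection trick to supply the missing sign hypothesis is a clean way to invoke Theorem~\ref{main} as a black box; alternatively one can observe that in the proof of Theorem~\ref{main} the sign condition $\partial y_{n+1}/\partial x_{n+1}\ge 0$ is used only in the last paragraph to force $A$ to be orthochronous, so the affine conclusion --- and hence $C^\infty$ regularity --- already follows without it.
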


\section{Discussions} \label{section:3}

In elementary wave equation theory, it is known that, if $\varphi$
is a solution of wave equation, then its value at $(\alpha_1,
\cdots, \alpha_{n+1})$ with $\alpha_{n+1}>0$ is completely
determined by the values of $\varphi$ on $C \cap \Sigma$ where
$\Sigma$ is a hyperplane $x_{n+1}=0$ and $C$ is the backward cone
with apex at $(\alpha_1, \cdots, \alpha_{n+1})$. The values of
$\varphi$ on $\Sigma$ outside $C \cap \Sigma$ can not affect the
value of the solution at $(\alpha_1, \cdots, \alpha_{n+1})$. For
this reason, $C \cap \Sigma$ is known as the domain of dependence
of the solution at the point $(\alpha_1, \cdots, \alpha_{n+1})$.

In terms of causality theory, the set $C \cup \Sigma$ is a
causally admissible subset with respect to the Cauchy surface
$\Sigma$ developed in Ref. \cite{CQG5}. To be precise, let
$\Sigma$ be the hyperplane defined by $x_{n+1} = 0$. Then $\Sigma$
is a Cauchy surface and under the causal automorphism $F$,
$F(\Sigma)$ is another Cauchy surface. The condition that ``for
any smooth function $\varphi$, $\sum\limits_{i=1}^n
\frac{\partial^2 \varphi}{\partial x_i^2} - \frac{\partial^2
\varphi}{\partial x_{n+1}^2} = 0$ if and only if
$\sum\limits_{i=1}^n \frac{\partial^2 \varphi}{\partial y_i^2} -
\frac{\partial^2 \varphi}{\partial y_{n+1}^2} = 0$" implies that
there exists a one-to-one correspondence between the domain of
dependence of $\varphi$ at $(\alpha_1, \cdots, \alpha_{n+1})$ with
respect to $\Sigma$ and the domain of dependence of $\varphi$ at
$F(\alpha_1, \cdots, \alpha_{n+1})$ with respect to $F(\Sigma)$.
In other words, the condition $\sum\limits_{i=1}^n
\frac{\partial^2 \varphi}{\partial x_i^2} - \frac{\partial^2
\varphi}{\partial x_{n+1}^2} = 0 \Leftrightarrow
\sum\limits_{i=1}^n \frac{\partial^2 \varphi}{\partial y_i^2} -
\frac{\partial^2 \varphi}{\partial y_{n+1}^2} = 0$ implies the
existence of causally admissible function with respect to $\Sigma$
and $F(\Sigma)$.

As can be seen in theorem \ref{lemma}, \ref{main} and the above
argument, wave equation itself implies the causal relation. In
fact, though we usually define causal relation on Minkowski
spacetime on the basis of physical reason, it can be understood
that, mathematically rigorously, the causal relation can be
defined on the basis of theory of wave equation.

As commented in section 1, causal automorphism on $\mathbb{R}^2_1$
has some peculiar properties in contrast to higher dimensional
Minkowski spacetimes. The method used in this paper does not work
in two-dimensional case since causal automorphism on
two-dimensional spacetime is not necessarily smooth and thus we
can not convert the wave equation written in $x_i$'s into the wave
equation written in $y_i$'s.

It is a well-known fact that, in more than two spacetime
dimensions, any causal automorphism is a conformal
diffeomorphism,(See Ref. \cite{HKM}, \cite{Malament} and
\cite{Fullwood}). Also, it is known that a diffeomorphism $f : (M,
g) \rightarrow (N, h)$ if a conformal transformation if and only
if $g(v,v)=0$ if and only if $h(F_*v, F_*v)=0$. In other words, in
more than two spacetime dimensions, any null vector preserving
diffeomorphism is a causal isomorphism. Null vector preserving
diffeomorphism can be interpreted as that any wave with
propagation speed $c = 1$ must be sent to a wave with propagation
speed $c=1$. In other words, it must be that $\sum\limits_{i=1}^n
\frac{\partial^2 \varphi}{\partial x_i^2} - \frac{\partial^2
\varphi}{\partial x_{n+1}^2} = 0 \Leftrightarrow
\sum\limits_{i=1}^n \frac{\partial^2 \varphi}{\partial y_i^2} -
\frac{\partial^2 \varphi}{\partial y_{n+1}^2} = 0$. If we consider
these, we can see that theorem \ref{main} can be used to give a
new proof of Zeeman's theorem.

\section{Acknowledgement}

This research was supported by Basic Science Research Program
through the National Research Foundation of Korea(NRF) funded by
the Ministry of Education, Science and Technology(2011-0022667).

\end{document}